% ****** Start of file apssamp.tex ******
%
%   This file is part of the APS files in the REVTeX 4.1 distribution.
%   Version 4.1r of REVTeX, August 2010
%
%   Copyright (c) 2009, 2010 The American Physical Society.
%
%   See the REVTeX 4 README file for restrictions and more information.
%
% TeX'ing this file requires that you have AMS-LaTeX 2.0 installed
% as well as the rest of the prerequisites for REVTeX 4.1
%
% See the REVTeX 4 README file
% It also requires running BibTeX. The commands are as follows:
%
%  1)  latex apssamp.tex
%  2)  bibtex apssamp
%  3)  latex apssamp.tex
%  4)  latex apssamp.tex
%
\documentclass[%
 reprint,
%superscriptaddress,
%groupedaddress,
%unsortedaddress,
%runinaddress,
%frontmatterverbose, 
%preprint,
%showpacs,preprintnumbers,
%nofootinbib,
%nobibnotes,
%bibnotes,
 amsmath,amssymb,
 aps,
%pra,
%prb,
%rmp,
%prstab,
%prstper,
%floatfix,
]{revtex4-1}

\usepackage{graphicx}% Include figure files
\usepackage{dcolumn}% Align table columns on decimal point
\usepackage{bm}% bold math
\usepackage{hyperref}% add hypertext capabilities
%\usepackage[mathlines]{lineno}% Enable numbering of text and display math
%\linenumbers\relax % Commence numbering lines

\usepackage[caption=false]{subfig}
\usepackage{siunitx}
\usepackage{amsthm}
\usepackage{physics}

%\usepackage[showframe,%Uncomment any one of the following lines to test 
%%scale=0.7, marginratio={1:1, 2:3}, ignoreall,% default settings
%%text={7in,10in},centering,
%%margin=1.5in,
%%total={6.5in,8.75in}, top=1.2in, left=0.9in, includefoot,
%%height=10in,a5paper,hmargin={3cm,0.8in},
%]{geometry}

\newtheorem{theorem}{Theorem}

\newtheorem{lemma}[theorem]{Lemma}

\DeclareMathOperator{\supp}{supp}

\begin{document}

\preprint{APS/123-QED}

\title{Photon Counting Distribution for Arrays of Single-Photon Detectors}% Force line breaks with \\
%\thanks{A footnote to the article title}%

\author{Mattias J\"{o}nsson}
\email{matjon4@kth.se}
% \altaffiliation[Also at ]{Physics Department, XYZ University.}
\author{Gunnar Bj\"{o}rk}%
\email{gbjork@kth.se}
\affiliation{%
 Department of Applied Physics, KTH Royal Institute of Technology\\
 AlbaNova University Center, SE 106 91 Stockholm, Sweden
}%

%\collaboration{MUSO Collaboration}%\noaffiliation

%\author{Charlie Author}
% \homepage{http://www.Second.institution.edu/~Charlie.Author}
%\affiliation{
% Second institution and/or address\\
% This line break forced% with \\
%}%
%\affiliation{
% Third institution, the second for Charlie Author
%}%
%\author{Delta Author}
%\affiliation{%
% Authors' institution and/or address\\
% This line break forced with \textbackslash\textbackslash
%}%

%\collaboration{CLEO Collaboration}%\noaffiliation

\date{\today}% It is always \today, today,
             %  but any date may be explicitly specified

\begin{abstract}
We derive a computationally efficient expression of the photon counting distribution for a uniformly illuminated array of single photon detectors. The expression takes the number of single detectors, their quantum efficiency, and their dark-count rate into account.  Using this distribution we compute the error of the array detector by comparing the output to that of a ideal detector. We conclude from the error analysis that the quantum efficiency must be very high in order for the detector to resolve a hand-full of photons with high probability. Furthermore, we conclude that in the worst-case scenario the required array size scales quadratically with the number of photons that should be resolved. We also simulate a temporal array and investigate how large the error is for different parameters and we compute optimal size of the array that yields the smallest error.
\end{abstract}

\maketitle

%\tableofcontents

\section{\label{sec:introduction}INTRODUCTION}
Photon-number-resolving detectors have applications in various optical fields, such as investigation of exceptional points in $\mathcal{PT}$-symmetric systems \cite{QuirozJuarez2019}, measurements in the number basis \cite{Kovalenko2018}, quantum key exchange \cite{PhysRevA.98.012333}, photon-counting laser-radars \cite{Huang2014}, X-ray astronomy \cite{Holland1999}, evaluation of single-photon sources \cite{Hadfield2005}, and elementary-particle detection \cite{Haba2008}. These detectors can essentially be divided into two classes, inherent detectors and multiplexed detectors. The former case use some internal physical mechanism to deduce how many photons hit the detector, e.g., a transition edge sensor \cite{Irwin1995, Rosenberg2005a, Rosenberg2005b, Lita2008, Lita2009, Fukuda2011, Cahall2017}, while the latter case consists of detector arrays which use the combined outputs of multiple single-photon detectors \cite{Banaszek2003, Fitch2003, Achilles2003, Achilles2004, Eraerds2007, Jiang2007, Divochiy2008, Guerrieri2010, Afek2009, Natarajan2013, Mattioli2015, Nehra2017, Young2019}.

These detector arrays are easy to model physically, but the resulting probability distribution is computationally inefficient to evaluate for larger array sizes due to the high number of possible outcomes. This is problematic since the probability distribution can be used to, for example, improve of quantum key distribution \cite{Qi2019} and to evaluate the performance of temporally multiplexed arrays \cite{Kruse2017}.

In Ref. \cite{Sperling2012} an analytical expression for the probability distribution of a uniformly illuminated array is derived. The resulting expression is given as an expectation value of a normal ordered operator, which unfortunately makes the expression difficult to evaluate for a general situation.

In this paper we re-derive the photon counting distribution for a uniformly illuminated array consisting of single photon detectors. The distribution takes quantum efficiency, dark count probability and finite array size into account. The resulting formula is in agreement with photon counting formula derived in Ref. \cite{Sperling2012}, but it is expressed in a form that is simpler, and therefore numerically more efficient to evaluate.

Using the photon counting distribution we investigate the effects of quantum efficiency, dark count probability and array size on the detector performance. We also apply the distribution to investigate the performance of temporal arrays and show that there exists a finite array size for which the error is minimal for a given number of photons used as input. Our results are qualitatively in agreement with Ref. \cite{Kruse2017}, but our respective methods for analysis and assumptions differ.

\section{\label{sec:photon-counting-distribution}PHOTON COUNTING DISTRIBUTION}
Consider an array consisting of $n$ indistinguishable single photon detectors with quantum efficiency $\eta$ and dark count probability $p_d$. Assume that the single photon detectors click with a probability $1 - (1 - \eta)^m$ when $m$ photons hit the detector, that the probability per event for a dark count is $p_d$ and that the detector is memory-less. The probability for the single photon detector to click when $m$ photons hit that detector is then given by
\begin{equation}
    \Pr(k \mid m; \eta, p_d) = k + (1 - 2 k) (1 - \eta)^m (1 - p_d),
    \label{eq:single-photon-detector-probability}
\end{equation}
where $k \in \{ 0, 1 \}$ is the output corresponding to that the detector does not click or does click, respectively.

Let us now derive the photon counting distribution, which is an expression for the conditional probability to get $k$ clicks from the combined output of all the $n$ detectors in the array when $m$ photons where used as illumination. Let us encode the number of photons that hit each detector with the vector $\vec{x} \in \mathbb{N}^n$, where component $i$ corresponds to the number of photons that hit the detector at position $i$. Under uniform illumination, the photon counting distribution for the array is a multinomial distribution given by
\begin{equation}
\begin{split}
    &\Pr(k \mid m; \eta, p_d, n) = \sum_{\norm{\vec{x}}_1 = m} \frac{m!}{\prod_{i = 1}^n x_i!} \frac{1}{n^m} \binom{n}{k} \times\\
    &\times \Bigg( \prod_{j = 1}^k \Pr(1 \mid x_j; \eta, p_d) \Bigg) \Bigg( \prod_{l = k + 1}^n \Pr(0 \mid x_l; \eta, p_d) \Bigg).
\end{split}
\end{equation}

The photon counting distribution can be evaluated in its current form. However, this form requires a summation of up to $\order{2^{2n} \sqrt{n}}$ terms to get all non-zero probabilities for a specified $m$, which quickly becomes too large to evaluate exactly. It is therefore highly interesting to rewrite the distribution into a computationally effective form. To do this we utilize that the array is symmetric under any permutation of the single photon detectors, which together with the multinomial theorem gives (see theorem \ref{the:photon-counting-distribution} for details)
\begin{equation}
\begin{split}
    &\Pr(k \mid m; \eta, p_d, n) = \frac{1}{n^m} \binom{n}{k} \sum_{l = 0}^k (-1)^l \times\\
    &\times (1 - p_d)^{n - k + l} \binom{k}{l} \big[ n - (n - k + l) \eta \big]^m.
\end{split}
\end{equation}
This expression is easier to evaluate, since only $\order{n^2}$ terms needs to be computed in order to find all non-zero probabilities for a given $m$. Furthermore, this version of the distribution also makes it easier to analytically investigate the detector performance.

\section{\label{sec:errors-and-convergence}ERRORS AND CONVERGENCE}

\begin{figure}[t]
    \centering
    \includegraphics[width=\linewidth]{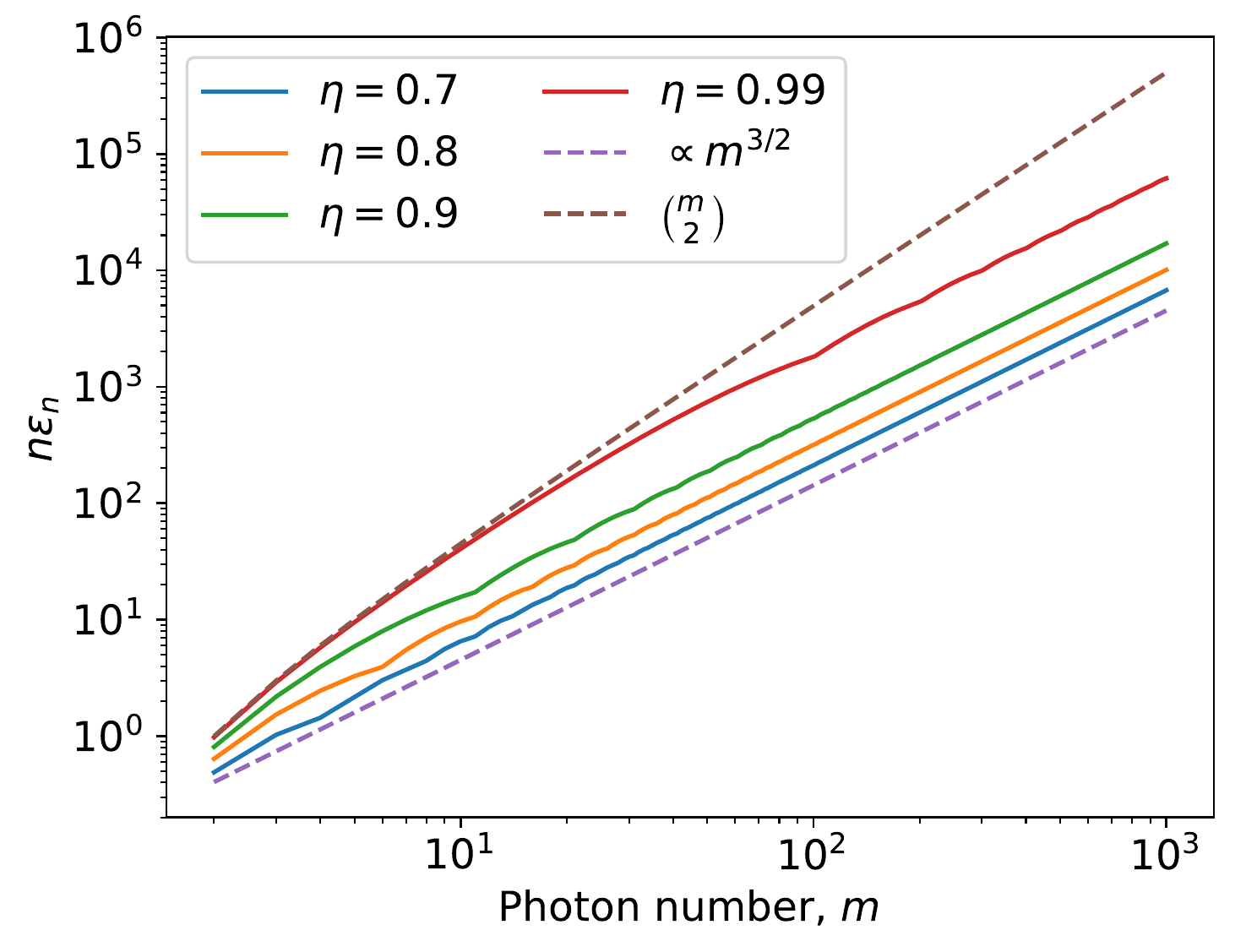}
    \caption{Scaling behavior for the finite size error $\eta_n$ when terms of order $\order{n^{-2}}$ are neglected. The error scales as between $m^{3/2}/n$ and $m^2/n$ depending on the quantum efficiency of the quantum efficiency of the system.}
    \label{fig:finite-size-scaling-for-large-n}
\end{figure}

Let us investigate the performance of a detector array by comparing how far the detector is from an ideal photon-number-resolving detector. We define an ideal photon-number-resolving detector to always gives an output equal to the number of incident photons, i.e. $\Pr_{\text{ideal}}(k \mid m) = \delta_{k, m}$. In our model such a detector corresponds to $\eta = 1$, $p_d = 0$ and that $n \to \infty$.

We define the error as the $L_1$ metric distance to an ideal detector
\begin{equation}
\begin{split}
    \epsilon &= \frac{1}{2} \norm{\Pr(\cdot \mid m; \eta, p_d, n) - \Pr(\cdot \mid m; 1, 0, \infty)}_1\\
    &= \frac{1}{2} \sum_{k \in \mathbb{N}} \abs{\Pr(k \mid m; \eta, p_d, n) - \Pr(k \mid m; 1, 0, \infty)}.
    \label{eq:total-error}
\end{split}
\end{equation}
The error $\epsilon \in [0, 1]$ and $\epsilon = 0$ if and only if the detector is ideal, whereas $\epsilon = 1$ corresponds to a detector characterized by $\eta$ and $n_d$ which is incapable of outputting the same value $m$ as the ideal detector for any $n$, i.e. $\supp(\Pr(\cdot \mid m; \eta, p_d, n)) \cap \supp(\Pr(\cdot \mid m; 1, 0, \infty)) = \varnothing$, where the support $\supp(f) = \{ x \mid f(x) \neq 0 \}$ is the set for which the function is non-zero.

To study the error $\epsilon$ it is possible to split it into three terms corresponding to the dark count error $\epsilon_d$, the quantum efficiency error $\epsilon_\eta$ and the finite size error $\epsilon_n$. By defining these errors appropriately we get with the triangle inequality that
\begin{equation}
\begin{split}
    \epsilon &\leq \frac{1}{2} \norm{\Pr(\cdot \mid m; \eta, p_d, n) - \Pr(\cdot \mid m; \eta, 0, n)}_1\\
    &+ \frac{1}{2} \norm{\Pr(\cdot \mid m; \eta, p_d, n) - \Pr(\cdot \mid m; \eta, 0, \infty)}_1\\
    &+ \frac{1}{2} \norm{\Pr(\cdot \mid m; \eta, 0, \infty) - \Pr(\cdot \mid m; 1, 0, \infty)}_1\\
    &= \epsilon_d + \epsilon_n + \epsilon_\eta,
\end{split}
\end{equation}
where the the last equality define the three errors.

The dark count error $\epsilon_d$ is in general difficult to evaluate, however it is straightforward to find an analytical expression for the case when the array is not illuminated. In this special case it holds that
\begin{equation}
    \epsilon_d = 1 - (1 - p_d)^n.
    \label{eq:dark-count-error}
\end{equation}
This result is of importance since it should intuitively corresponds to the largest dark count error possible for a given $n$ and since the result suggests that the maximal dark count error is negligible in most cases. For instance, superconducting nano-wire detectors may have a dark count probability around $p_d = \SI{1e-4}{}$, which would make the error $\epsilon_d$ less than $\SI{1}{\percent}$ for an array with $100$ detectors.

\begin{figure}[t]
    \centering
    \includegraphics[width=\linewidth]{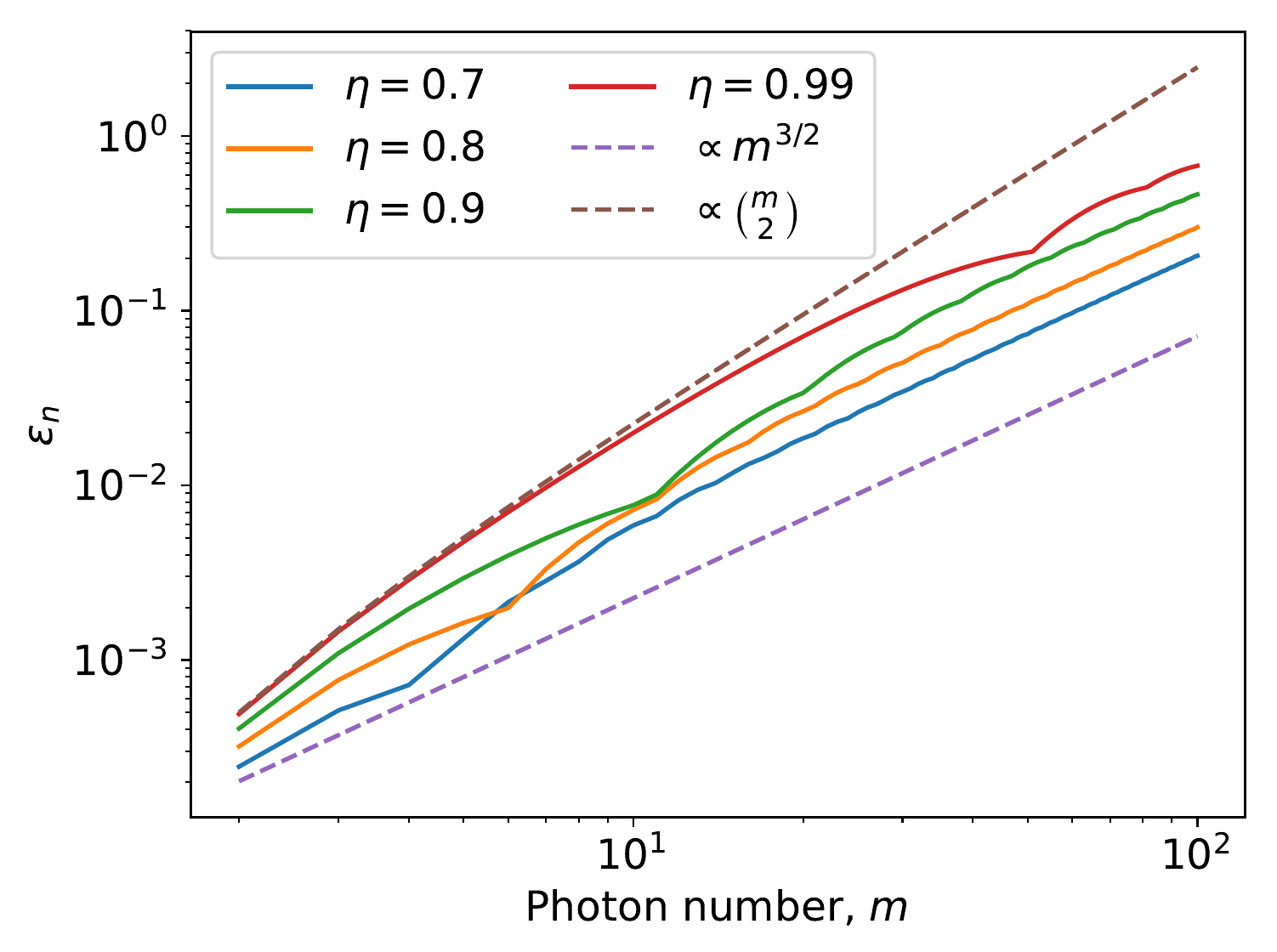}
    \caption{Exact scaling behavior for the finite size error $\eta_n$ when $n = 1000$. The error $\eta_n$ has the same behavior as was predicted when terms of $\order{n^{-2}}$ were neglected, which shows that the approximation did not alter the qualitative behavior.}
    \label{fig:finite-size-n-1000}
\end{figure}

The finite size error $\eta_n$ can be shown with simulations to scale between $m^{3/2}/n$ and $m^2/n$ for large enough $n$ (see Fig. \ref{fig:finite-size-scaling-for-large-n} and Fig. \ref{fig:finite-size-n-1000}). The exact scaling behavior depends on the quantum efficiency. When $\eta = 1$ it is possible to show that (see theorem \ref{the:finite-size-error} for details)
\begin{equation}
    \epsilon_n = \frac{1}{n} \binom{m}{2} + \order{n^{-2}} \sim \frac{m^2}{2n} + \order{n^{-2}},
\end{equation}
which shows that the number of single photon detectors required grows quadratically with the number of photons being resolved.

For $\eta < 1$ simulations suggest that the finite size scaling behavior is $m^{3/2}/n$ given that $m$ is large enough. However, for smaller $m$ simulations suggest that the scaling behavior may grow more rapidly, which is problematic since most practical photon-number-resolving detectors are prevented by the quantum-efficiency error from operating in the high $m$ regime.

In analogy with ( \ref{eq:dark-count-error}), the quantum efficiency error takes the form
\begin{equation}
    \epsilon_\eta = 1 - \eta^m.
\end{equation}
This shows that the requirement on quantum efficiency grows quickly with number of photons in the input. For instance, if the single photon detectors have $\eta = 0.9$ then the array can only detect $6$ photons with an error $\epsilon_\eta < 0.5$, which shows that the quantum efficiency is a major challenge in the creation of photon-number-resolving detector.

\section{\label{sec:temporal-array-simulation}TEMPORAL ARRAY SIMULATION}
\begin{figure}[t]
    \centering
    \includegraphics[width=0.7\linewidth]{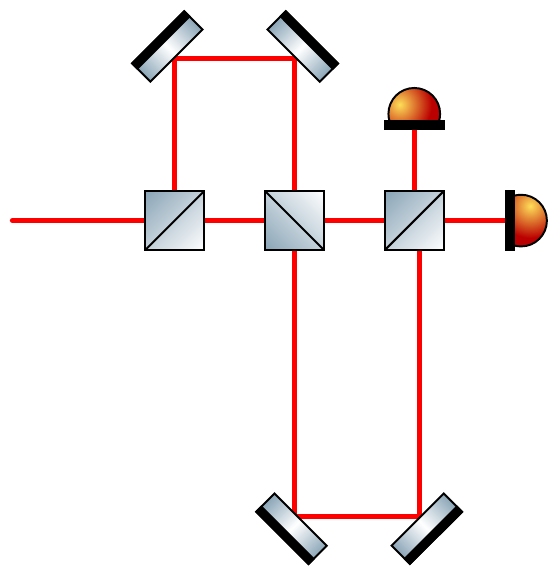}
    \caption{A schematic image of a temporal detection array for pulsed light. The input signal is divided into multiple time-bins by couplers. If the path lengths are chosen appropriately so the single photon detectors have time to recover between each pulse, then it is possible to use only two single-photon detectors.}
    \label{fig:temporal-array-schematics}
\end{figure}

\begin{figure}[t]
    \centering
    \includegraphics[width=\linewidth]{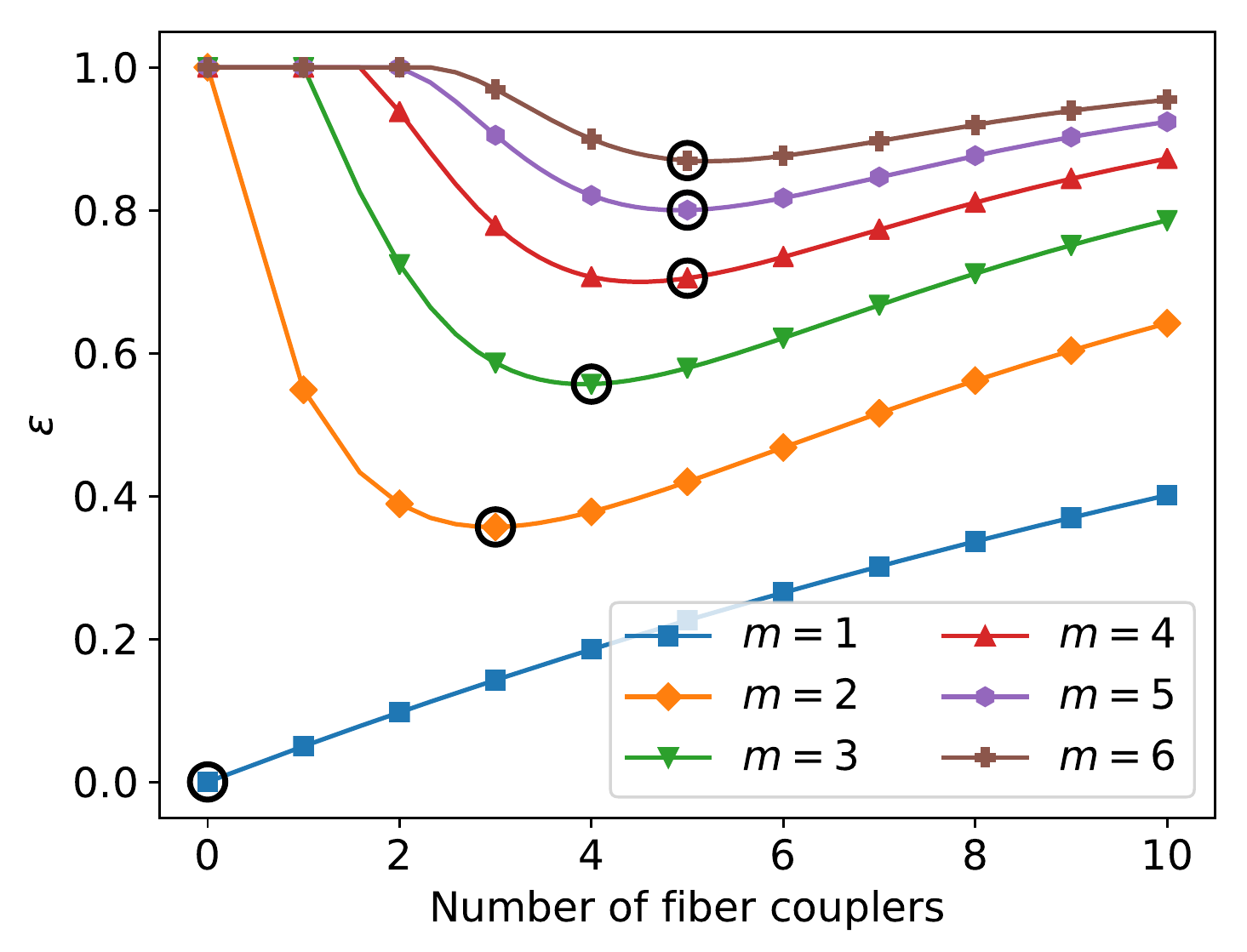}
    \caption{The total error $\epsilon$ as a function of the number of couplers when $\eta = 1.0$ and $\eta_c = 0.95$. The minimal errors for a given $m$ is marked with a black circle. As expected there exists a finite array size that minimizes the error which depends on the number of input photons and the loss in the couplers.}
    \label{fig:optimal-temporal-array-etac-095}
\end{figure}

\begin{figure}[t]
    \centering
    \includegraphics[width=\linewidth]{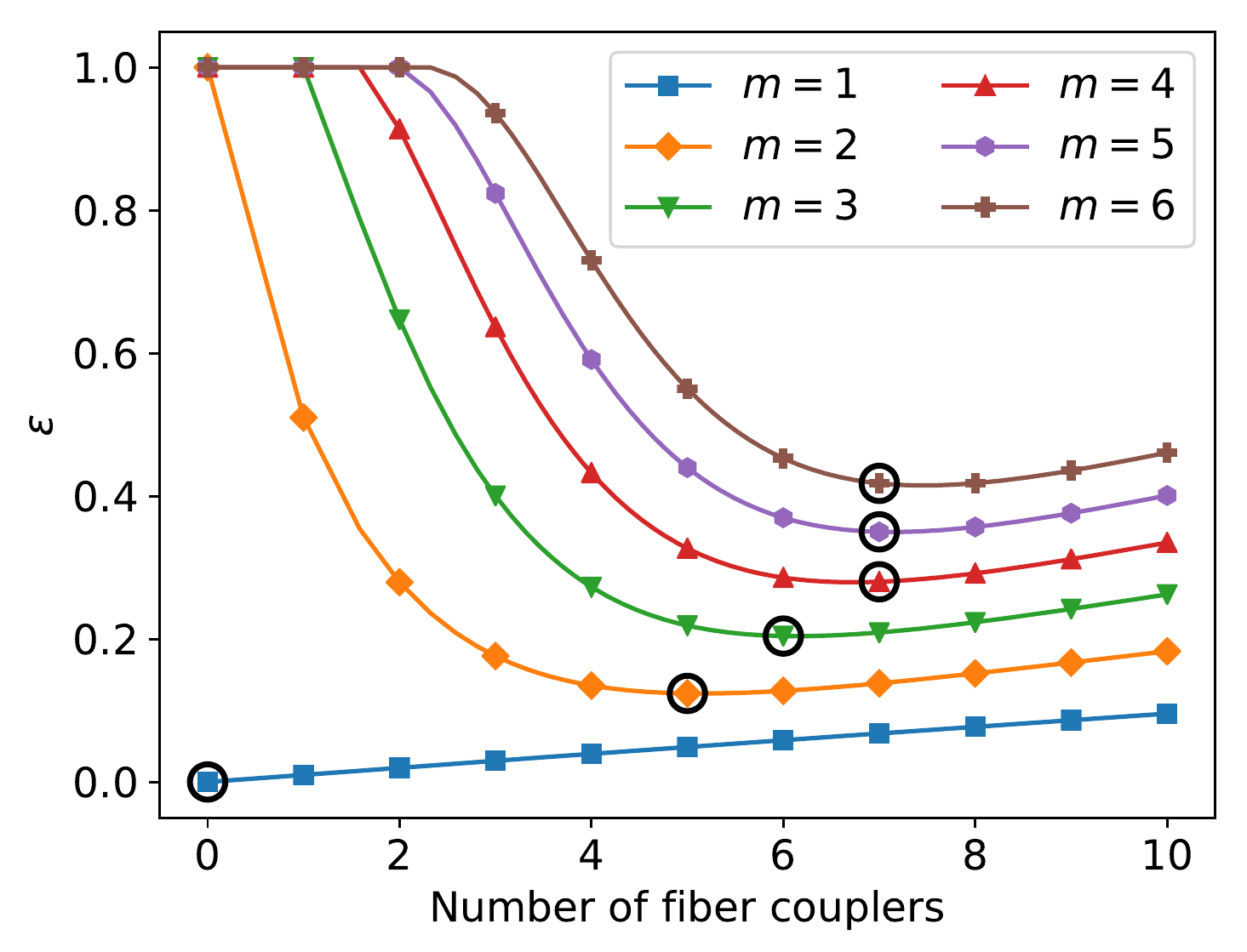}
    \caption{The total error $\epsilon$ as a function of the number of couplers when $\eta = 1.0$ and $\eta_c = 0.99$. The minimal errors for a given $m$ is marked with a black circle. Quite naturally, compared to Fig. \ref{fig:optimal-temporal-array-etac-095} the error is lower for higher quantum efficiencies and the optimal array size is generally larger.}
    \label{fig:optimal-temporal-array-etac-099}
\end{figure}

Ref. \cite{Fitch2003} suggested that the use of a temporal array (see Fig. \ref{fig:temporal-array-schematics}) could reduce the required number of single photon detectors in the array. However, this method has the drawback that it reduces temporal resolution. Furthermore, the effective quantum efficiency of the single photon detectors is reduced by the insertion of optical components. Hence, the total error $\epsilon$ takes its minimal value for a finite number of detectors \cite{Kruse2017}. The optimal array size will in general depend on the number of photons used as input and on the loss in each optical component.

To investigate optimal sizes for the arrays we model it as consisting of $n = 2^N$ dark-count-free single-photon detectors, where $N$ is the number of couplers. We assume that the couplers have linear losses and that a fraction $\eta_c$ of the photons survive to the next part of the circuit.  The effective quantum efficiency for each detector is then given by
\begin{equation}
    \eta_{\text{eff}} = \eta_c^N \eta,
\end{equation}
where $\eta$ is the quantum efficiency for the single photon detector.

In this model the optimal array size is independent of the detector quantum efficiency $\eta$ since the loss is linear and can therefore be modeled as an attenuation of the input signal. The effect of this loss in an increased error, although it does not change probability for two or more photons to hit the same single photon detector. It is therefore possible to run all simulations with $\eta = 1.0$. Hence, the total error for the array is a function of $N$ given by
\begin{equation}
    \epsilon = \frac{1}{2} \norm{\Pr(\cdot \mid m; \eta_c^N, 0, 2^N) - \Pr(\cdot \mid m; 1, 0, \infty)}_1.
\end{equation}

In Fig. \ref{fig:optimal-temporal-array-etac-095}, Fig. \ref{fig:optimal-temporal-array-etac-099} the result from  simulations where $\eta_c = 0.95$ and $\eta_c = 0.99$ are presented. As expected there exists an optimal array size which grows with the number of input photons and with higher $\eta_c$. Furthermore, as expected the overall error is lower in the latter figure since the quantum efficiency error is smaller.

The two simulations also show that the insertion loss plays a large role in the construction of a temporal array. Independent of $\eta$, when $\eta_c = 0.95$  the maximal number of photons that can be measured with $\epsilon < 0.5$ is two, while when $\eta_c = 0.99$ it is possible to measure up seven photons with $\eta < 0.5$ when everything else is ideal. This puts a strict requirement on the the optical components when building a temporal array.

\section{\label{sec:summary}SUMMARY}
In this paper we derive a closed analytical form of the probability distribution for a uniformly illuminated array of single photon detectors. We show that the full distribution can be computed in $\order{n^2}$ number of summations, which makes it substantially more efficient than by summing the contributions from all possible outcomes.

Investigations of the errors based on realistic numbers show that the dark count error is often small in comparison to the other errors if the array size is not too large. The finite size error grows in a worst case scenario as $\order{m^2/n}$, which implies that the number of single-photon detectors needs to be much larger than the square number of photons that the detector should resolve. The quantum efficiency error requires that the quantum efficiency needs to be very high, $>95$ \%, in order to resolve up to a hand-full of photons with high probability.

By building a temporal array it is possible to reduce the requirement on the number of single photon detectors to two. The drawback with this scheme is that the temporal resolution is reduced and that each added coupler reduce the effective quantum efficiency of the system. The consequence is that there exists a limit on the number of couplers that gives the optimal temporal array size for a given number of one wishes to photons.

\section*{\label{sec:acknowledgments}Acknowledgments}
This work was supported by the Knut and Alice Wallenberg Foundation grant "Quantum Sensing", the Swedish Research Council (VR) through grant 621-2014-5410, and through its support of the Linn\ae us Excellence Center ADOPT.

The simulations were performed on resources provided by the Swedish National Infrastructure for Computing (SNIC) at PDC.

Fig. \ref{fig:temporal-array-schematics} was created with ComponentLibrary by Alexander Franzen (\url{http://www.gwoptics.org/ComponentLibrary/}).

\appendix

\section{\label{sec:deriving-pcd}PHOTON COUNTING DISTRIBUTION}
\begin{lemma}\label{the:sum-invariance}
Let $m, \alpha, \beta \in \mathbb{R}$, $\vec{x} \in \mathbb{R}^n$ and $S \subseteq \{ l \in \mathbb{N} \mid l \leq n \}$. Introduce $S_o(j)$ to be the set containing the $j$ smallest elements of $S$, let $f: \mathbb{R}^n \to \mathbb{R}$ be a function which is invariant under any permutation of the input and let $g: \mathbb{R}^{\abs{S}} \to \mathbb{R}$. It then holds that
\begin{equation}
\begin{split}
    &\sum_{\norm{\vec{x}}_1 = m} f(\vec{x}) g(\{ x_l \mid l \not\in S \}) \prod_{j \in S} \big( 1 - \alpha \beta^{x_j} \big)=\\
    &\sum_{\norm{\vec{x}}_1 = m} f(\vec{x}) g(\{ x_l \mid l \not\in S \}) \sum_{j = 0}^{\abs{S}} (- \alpha)^j \binom{\abs{S}}{j} \prod_{k \in S_o(j)} \beta^{x_k}.
\end{split}
\end{equation}
\end{lemma}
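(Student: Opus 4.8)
The plan is to prove the identity by first expanding the product over subsets of $S$ for each fixed summation index, and then exploiting the symmetry of the remaining factors to collapse a sum over equal-size subsets into a single binomial coefficient. The choice of $S_o(j)$ plays no special role beyond being a fixed representative of the size-$j$ subsets of $S$.

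First I would expand the product on the left-hand side by the distributive law: for any fixed $\vec{x}$,
\[
\prod_{j \in S}\big(1 - \alpha \beta^{x_j}\big) = \sum_{T \subseteq S} (-\alpha)^{\abs{T}} \prod_{k \in T} \beta^{x_k}.
\]
Substituting this into the left-hand side and grouping the subsets $T$ according to their cardinality $\abs{T}=j$ gives
\[
\text{LHS} = \sum_{j=0}^{\abs{S}} (-\alpha)^j \sum_{\substack{T \subseteq S \\ \abs{T}=j}} \Bigg( \sum_{\norm{\vec{x}}_1 = m} f(\vec{x})\, g(\{x_l \mid l \notin S\}) \prod_{k \in T} \beta^{x_k} \Bigg).
\]
It then suffices to show that the bracketed inner sum takes the same value for every $T \subseteq S$ of a given size $j$. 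Granting this, the inner sum equals its value at the canonical choice $T = S_o(j)$, and summing over the $\binom{\abs{S}}{j}$ subsets of that size produces exactly the factor $\binom{\abs{S}}{j}$ on the right-hand side; interchanging the finite $j$-sum back inside the $\vec{x}$-sum then recovers the stated expression.

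The key step is this symmetry argument. Given two subsets $T, T' \subseteq S$ with $\abs{T} = \abs{T'}$, I would pick a permutation $\sigma$ of $\{1,\dots,n\}$ with $\sigma(T) = T'$ that fixes every index outside $S$; such a $\sigma$ exists precisely because $T$ and $T'$ have equal size and both lie in $S$, so $\sigma$ may be taken to permute $S$ internally. Reindexing the inner sum by the coordinate permutation $\sigma$ — a bijection of the summation domain that preserves $\norm{\cdot}_1$ — I would then verify three invariances: $f(\vec{x})$ is unchanged because $f$ is permutation-invariant; the argument of $g$ is unchanged because $\sigma$ fixes the indices outside $S$, so $x_{\sigma(l)} = x_l$ for $l \notin S$; and the monomial transforms as $\prod_{k \in T} \beta^{x_k} \mapsto \prod_{k \in T} \beta^{x_{\sigma(k)}} = \prod_{k \in T'} \beta^{x_k}$ since $\sigma$ carries $T$ onto $T'$. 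Hence the inner sum for $T$ equals that for $T'$, as required.

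I expect the main obstacle to be bookkeeping rather than any deep difficulty. One must state explicitly that the summation range $\{\norm{\vec{x}}_1 = m\}$ is itself invariant under coordinate permutations, so that the reindexing is a genuine bijection of the index set, and one must track indices carefully enough to guarantee both $\sigma(T)=T'$ and that the complement of $S$ is fixed pointwise (the latter being what leaves $g$ untouched). Once these invariances are in place, the replacement of each size-$j$ subset by $S_o(j)$ and the final reordering of the sums are immediate.
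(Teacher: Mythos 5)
Your proposal is correct and follows essentially the same route as the paper's own proof: expand the product over subsets, use a permutation of $S$ (extended by the identity off $S$) to map each size-$j$ subset to the canonical representative $S_o(j)$ while invoking the permutation invariance of $f$, of $g$'s argument, and of the summation domain $\norm{\vec{x}}_1 = m$, and then count the $\binom{\abs{S}}{j}$ subsets. If anything, your write-up is more explicit than the paper's about why the reindexing is a bijection of the index set and why the complement of $S$ must be fixed pointwise.
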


\begin{proof}
Expanding the product yields that
\begin{equation}
\begin{split}
    &I = \sum_{\norm{\vec{x}}_1 = m} f(\vec{x}) g(\{ x_l \mid l \not\in S \}) \prod_{j \in S} \big( 1 - \alpha \beta^{x_j} \big)=\\
    &\sum_{\norm{\vec{x}}_1 = m} f(\vec{x}) g(\{ x_l \mid l \not\in S \}) \times\\
    &\times \sum_{j = 0}^{\abs{S}} (- \alpha)^j \sum_{S' \subseteq S: \abs{S'} = j} \prod_{k \in S'} \beta^{x_k},
    \label{eq:invariance-sum-step-1}
\end{split}
\end{equation}
where the inner most sum corresponds to summation over all subsets of $S$ with cardinality $j$.

Let us look at the class of bijections $\phi: S \to S$ and the corresponding bijection acting on the components of $\vec{x}$
\begin{equation}
    \varphi(x_l) = 
    \begin{cases}
        x_{\phi(l)} & \text{if } l \in S\\
        x_l         & \text{otherwise} 
    \end{cases}.
\end{equation}
For every subset $S'$ with cardinality $j$, there exists a map $\phi$ that maps $S' \mapsto S_o(j)$, where $S_o(j)$ is the set containing the $j$ smallest elements of $S$. Using this map to change variables in the sum in equation \eqref{eq:invariance-sum-step-1} from $\vec{x} \mapsto \varphi(\vec{x}) = \vec{y}$ yields
\begin{equation}
\begin{split}
    &I = \sum_{j = 0}^{\abs{S}} (- \alpha)^j \sum_{S' \subseteq S: \abs{S'} = j} \sum_{\norm{\varphi^{-1}(\vec{y})}_1 = m} f(\varphi^{-1}(\vec{y})) \times\\
    &\times g(\varphi^{-1}(\{ y_l \mid l \not\in S \})) \prod_{k \in S_o(j)} \beta^{y_k}.
\end{split}
\end{equation}
Using that $\varphi$ is the identity transformation on components with label $l \not\in S$, that $f$ is assumed to be invariant under the order of which the components are given and that a permutation preserves the $L_1$ norm yields
\begin{equation}
\begin{split}
    &I = \sum_{\norm{\vec{y}}_1 = m} f(\vec{y}) g(\{ y_l \mid l \not\in S \}) \sum_{j = 0}^{\abs{S}} (- \alpha)^j \times\\
    &\times \prod_{k \in S_o(j)} \beta^{y_k} \sum_{S' \subseteq S: \abs{S'} = j} 1.
\end{split}
\end{equation}
The number of subset of $S$ with cardinality $j$ is given by the number of ways to select $j$ elements out of $\abs{S}$. Hence the sum can be written
\begin{equation}
\begin{split}
    &I = \sum_{\norm{\vec{y}}_1 = m} f(\vec{y}) g(\{ y_l \mid l \not\in S \}) \sum_{j = 0}^{\abs{S}} (- \alpha)^j \binom{\abs{S}}{j} \prod_{k \in S_o(j)} \beta^{y_k}.
\end{split}
\end{equation}
\end{proof}

\begin{theorem}\label{the:photon-counting-distribution}
The photon counting distribution for a uniformly illuminated array consisting of $n$ indistinguishable single photon detectors with click probabilities given by equation \eqref{eq:single-photon-detector-probability} is given by
\begin{equation}
\begin{split}
    &\Pr(k \mid m; \eta, p_d, n) = \frac{1}{n^m} \binom{n}{k} \sum_{l = 0}^k (-1)^l \times\\
    &\times (1 - p_d)^{n - k + l} \binom{k}{l} \big[ n - (n - k + l) \eta \big]^m.
\end{split}
\end{equation}
\end{theorem}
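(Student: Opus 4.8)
The plan is to begin from the multinomial form of $\Pr(k \mid m; \eta, p_d, n)$ stated in Section \ref{sec:photon-counting-distribution}, eliminate the product over the $k$ ``click'' probabilities by invoking Lemma \ref{the:sum-invariance}, and then collapse the remaining sum over photon configurations $\vec{x}$ with the multinomial theorem. The permutation symmetry of the array is exactly what makes the lemma applicable, and the multinomial theorem does the rest.

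First I would substitute the single-detector probabilities from equation \eqref{eq:single-photon-detector-probability}. The non-clicking detectors contribute $\Pr(0 \mid x_l; \eta, p_d) = (1-p_d)(1-\eta)^{x_l}$, so their product factors cleanly as $(1-p_d)^{n-k}\prod_{l=k+1}^n (1-\eta)^{x_l}$, and the dark-count term $(1-p_d)^{n-k}$ can be pulled out front. Each clicking detector contributes $\Pr(1 \mid x_j; \eta, p_d) = 1 - (1-p_d)(1-\eta)^{x_j}$, which is precisely of the form $1 - \alpha\beta^{x_j}$ with $\alpha = 1-p_d$ and $\beta = 1-\eta$. I would then apply Lemma \ref{the:sum-invariance} with $S = \{1,\dots,k\}$, taking $f(\vec{x}) = m!/\prod_i x_i!$ (which is manifestly permutation invariant) and $g$ equal to the surviving product $\prod_{l=k+1}^n (1-\eta)^{x_l}$ over the non-$S$ indices. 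This turns $\prod_{j=1}^k (1-\alpha\beta^{x_j})$ into $\sum_{l=0}^k (-\alpha)^l \binom{k}{l}\prod_{i \in S_o(l)}\beta^{x_i}$, with $S_o(l) = \{1,\dots,l\}$.

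After interchanging the finite sum over $l$ with the sum over $\vec{x}$, the key step is the multinomial theorem applied at each fixed $l$. For that $l$, the indices in $S_o(l)\cup\{k+1,\dots,n\}$, of which there are $n-k+l$, carry weight $(1-\eta)$, while the remaining $k-l$ indices carry unit weight, so $\sum_{\norm{\vec{x}}_1 = m}\tfrac{m!}{\prod_i x_i!}\prod_i w_i^{x_i} = \big(\sum_i w_i\big)^m = \big[(n-k+l)(1-\eta) + (k-l)\big]^m = \big[n - (n-k+l)\eta\big]^m$. Collecting $(-\alpha)^l = (-1)^l (1-p_d)^l$ together with the extracted $(1-p_d)^{n-k}$ gives $(-1)^l(1-p_d)^{n-k+l}$, and relabelling the summation index reproduces the stated formula. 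I expect the only real subtlety to lie in the index bookkeeping after the lemma is applied: one must verify that precisely the $k-l$ indices lying outside $S_o(l)\cup\{k+1,\dots,n\}$ contribute unit weight, so that the base of the multinomial sum is $n-(n-k+l)\eta$ and not shifted by a stray factor of $\eta$.
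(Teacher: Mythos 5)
Your proposal is correct and follows essentially the same route as the paper's own proof: start from the multinomial form, apply Lemma \ref{the:sum-invariance} with $S=\{1,\dots,k\}$, $\alpha = 1-p_d$, $\beta = 1-\eta$, and then collapse the configuration sum with the multinomial theorem, where the $n-k+l$ detectors carrying weight $1-\eta$ and the $k-l$ carrying unit weight combine to the base $n-(n-k+l)\eta$. Your version is in fact more explicit than the paper's about the identification of $f$, $g$, $\alpha$, $\beta$ and the weight bookkeeping, which the paper leaves implicit.
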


\begin{proof}
Assume that $m$ photons are uniformly distributed over the array. The photon numbers on the single photon detectors are then multinomially distributed and the click probabilities are given by equation \eqref{eq:single-photon-detector-probability}. The photon counting distribution can therefore be written as
\begin{equation}
\begin{split}
    &\Pr(k \mid m; \eta, p_d, n) = \sum_{\norm{\vec{x}}_1 = m} \frac{m!}{\prod_{i = 1}^n x_i!} \frac{1}{n^m} \binom{n}{k} \times\\
    &\times \Bigg( \prod_{j = 1}^k \Pr(1 \mid x_j; \eta, p_d) \Bigg) \Bigg( \prod_{l = k + 1}^n \Pr(0 \mid x_l; \eta, p_d) \Bigg).
\end{split}
\end{equation}
Applying lemma \ref{the:sum-invariance} yields
\begin{equation}
\begin{split}
    &\Pr(k \mid m; \eta, p_d, n) = \frac{1}{n^m} \binom{n}{k} \sum_{j = 0}^k (-1)^j (1 - p_d)^{n - k + j}\times\\
    &\times \binom{k}{j} \sum_{\norm{\vec{x}}_1 = m} \frac{m!}{\prod_{i = 1}^n x_i!} \Bigg( \prod_{a = 1}^j (1 - \eta)^x_a \Bigg) \Bigg( \prod_{l = k + 1}^n (1 - \eta)^x_l \Bigg).
\end{split}
\end{equation}
The inner most sum can be transformed with the multinomial theorem giving
\begin{equation}
\begin{split}
    &\Pr(k \mid m; \eta, p_d, n) = \frac{1}{n^m} \binom{n}{k} \sum_{j = 0}^k (-1)^j \times\\
    &\times (1 - p_d)^{n - k + j} \binom{k}{j} \big[n - (n - k + j) \eta \big]^m.
\end{split}
\end{equation}
\end{proof}

\section{\label{sec:deriving-errors}ERROR ANALYSIS}
\begin{theorem}\label{the:binomial-convergence}
Assume that $p_d = 0$ then it holds that the photon counting distribution converges to a binomial distribution
\begin{equation}
    \Pr(k \mid m; \eta, 0, n) \to \binom{m}{k} \eta^k (1 - \eta)^{m - k},
\end{equation}
as $n \to \infty$.
\end{theorem}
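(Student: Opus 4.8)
The plan is to start from the closed form of Theorem~\ref{the:photon-counting-distribution} with $p_d = 0$ and extract its leading behaviour as $n \to \infty$. Setting $p_d = 0$ and writing $n - (n-k+l)\eta = n(1-\eta) + (k-l)\eta$, the factor $n^m$ produced by the bracket cancels the prefactor $1/n^m$, leaving
\begin{equation}
    \Pr(k \mid m; \eta, 0, n) = \binom{n}{k} \sum_{l = 0}^{k} (-1)^l \binom{k}{l} \left[ (1 - \eta) + \frac{(k - l)\eta}{n} \right]^m.
\end{equation}
First I would binomially expand the bracket in powers of $(k-l)\eta/n$, interchange the two finite sums, and group the result by powers of $1/n$. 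Each power $n^{-j}$ then appears multiplied by the alternating sum $\sum_{l=0}^k (-1)^l \binom{k}{l} (k-l)^j$, together with the factor $\binom{m}{j}(1-\eta)^{m-j}\eta^j$.

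The key step is the evaluation of that inner alternating sum. After the substitution $i = k - l$ it becomes the $k$-th forward finite difference of $x^j$ evaluated at the origin, which equals $k!\,S(j,k)$, where $S(j,k)$ is the Stirling number of the second kind counting partitions of a $j$-element set into $k$ non-empty blocks (equivalently, $k!\,S(j,k)$ is the number of surjections from a $j$-set onto a $k$-set). The decisive property is that $S(j,k) = 0$ whenever $j < k$, so every term with $j < k$ vanishes identically. These are precisely the terms that would otherwise be amplified by the growth $\binom{n}{k} \sim n^k/k!$, so their cancellation is exactly what makes the limit finite.

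With those terms gone, the lowest surviving power is $j = k$, for which $S(k,k) = 1$ gives the coefficient $\binom{m}{k}(1-\eta)^{m-k}\eta^k\,k!\,n^{-k}$, while every contribution from $j > k$ carries a strictly higher negative power of $n$. Combining this with $\binom{n}{k} = \tfrac{1}{k!}\,n^k\bigl(1 + \order{n^{-1}}\bigr)$, the factors $n^k$ and $n^{-k}$ cancel, the two occurrences of $k!$ cancel, and the remaining corrections are $\order{n^{-1}}$. Taking $n \to \infty$ therefore yields $\binom{m}{k}\eta^k(1-\eta)^{m-k}$, as claimed; note that the argument degenerates correctly in the boundary cases $k = 0$ and $m < k$.

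The main obstacle is establishing the vanishing of the alternating sum for $j < k$: without it the $\binom{n}{k}$ prefactor, which grows like $n^k$, would produce divergent contributions and the limit would fail to exist. Once this combinatorial cancellation is in hand---either through the Stirling-number identity or, equivalently, by noting that the $k$-th finite difference annihilates any polynomial in $l$ of degree below $k$---the rest is routine asymptotic bookkeeping. I would also observe that the same expansion gives explicit control over the subleading $\order{n^{-1}}$ term, which is what underlies the finite-size error estimate quoted for $\eta = 1$ in Theorem~\ref{the:finite-size-error}.
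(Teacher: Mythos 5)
Your proof is correct, and its combinatorial engine is the same as the paper's: a binomial expansion in powers of $1/n$, evaluation of the alternating sum $\sum_{l=0}^{k}(-1)^{l}\binom{k}{l}(k-l)^{j}=k!\,S(j,k)$ via Stirling numbers of the second kind, and the key cancellation $S(j,k)=0$ for $j<k$, which kills precisely the terms that the growth $\binom{n}{k}\sim n^{k}/k!$ would otherwise make divergent. Where you genuinely deviate is in the handling of the $n$-dependence: the paper never cancels $n^{m}$ against the bracket, but instead expands $\binom{n}{k}=\frac{1}{k!}\sum_{j}s(k,j)n^{j}$ exactly, using Stirling numbers of the first kind, and organizes the whole distribution as an exact (Laurent) polynomial in $n$, equation \eqref{eq:stirling-rewrite}, from which the limit is read off as the $n^{0}$ coefficient. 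Your route --- factoring $n^{m}$ out of $[n-(n-k+l)\eta]^{m}$ first and then using only the leading asymptotics $\binom{n}{k}\,k!=n^{k}\bigl(1+\order{n^{-1}}\bigr)$ --- is more elementary and is all that the limit statement requires. What the paper's heavier machinery buys is reuse: equation \eqref{eq:stirling-rewrite} is exactly what the proof of Theorem~\ref{the:finite-size-error} specializes (at $\eta=1$) to extract the finite-size error $\frac{1}{n}\binom{m}{2}+\order{n^{-2}}$, and for that one needs the exact subleading coefficient of $\binom{n}{k}$, namely $s(k,k-1)=-\binom{k}{2}$, not merely $1+\order{n^{-1}}$. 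So your closing remark that the same expansion ``gives explicit control over the subleading term'' is true only once you reinstate the first-kind Stirling correction to $\binom{n}{k}$ --- at which point you have reconstructed the paper's expansion.
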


\begin{proof}
Let us use the identity
\begin{equation}
    \binom{n}{k} = \frac{1}{k!} \sum_{j = 0}^k s(k, j) n^j,
\end{equation}
where $s(k, j)$ are the Stirling numbers of the first kind and the binomial formula on the photon counting distribution
\begin{equation}
\begin{split}
    &\Pr(k \mid m; \eta, 0, n) = \frac{1}{k!} \sum_{a = 0}^m \sum_{l, j = 0}^k \frac{(-1)^l s(k, j)}{n^{m - j - a}}\times\\
    &\times \binom{k}{l} \binom{m}{a} (1 - \eta)^a (k - l)^{m - a} \eta^{m - a}.
\end{split}
\end{equation}
Applying the identity
\begin{equation}
    S(m - a, k) = \frac{1}{k!} \sum_{l = 0}^k (-1)^l \binom{k}{l} (k - l)^{m - a},
\end{equation}
where $S(m - a, k)$ are the Stirling numbers of the second kind yields
\begin{equation}
\begin{split}
    &\Pr(k \mid m; \eta, 0, n) = \sum_{a = 0}^m \sum_{j = 0}^k \frac{s(k, j) S(m - a, k)}{n^{m - j - a}}\times\\
    &\times \binom{m}{a} (1 - \eta)^a \eta^{m - a}\\
    &= \sum_{j = 0}^k \sum_{b = j - m}^j s(k, j) S(j - b, k) \binom{m}{j - b} \eta^{j - b} \times\\
    &\times (1 - \eta)^{b + m - j} n^b.
    \label{eq:stirling-rewrite}
\end{split}
\end{equation}
We notice that $j - b < k$ if $b > 0$ which implies that $S(j - b, k) = 0$ when $b > 0$. Hence, the only term surviving in the limit $n \to \infty$ is when $b = 0$. We get
\begin{equation}
\begin{split}
    &\lim_{n \to \infty} \Pr(k \mid m; \eta, 0, n) = \sum_{j = 0}^k s(k, j) S(k, k) \binom{m}{j}\times\\
    &\times \eta^j (1 - \eta)^{m - j}.
\end{split}
\end{equation}
Using that $S(j, k) = 0$ if $j < k$ and that $s(k, k) S(k, k) = 1$ gives
\begin{equation}
    \lim_{n \to \infty} \Pr(k \mid m; \eta, 0, n) = \binom{m}{k} \eta^k (1 - \eta)^{m - k}.
\end{equation}
\end{proof}

%\begin{theorem}\label{the:quantum-efficiency-error}
%The quantum efficiency error is given by
%\begin{equation}
%\begin{split}
%    \epsilon_\eta &= \frac{1}{2} \norm{\Pr(k \mid m; \eta, 0, \infty) - \Pr(k \mid m; 1, 0, \infty)}_1\\
%    &= 1 - \eta^m.
%\end{split}
%\end{equation}
%\end{theorem}
%
%\begin{proof}
%From Lemma \ref{the:binomial-convergence} it holds that
%\begin{equation}
%\begin{split}
%    \epsilon_\eta &= \frac{1}{2} \sum_{k \in \mathbb{N}} \abs{\binom{m}{k} \eta^k (1 - \eta)^{m - k} - \delta_{k, m}}\\
%    &= \frac{1}{2} \sum_{k \neq m} \binom{m}{k} \eta^k (1 - \eta)^{m - k} + \frac{1}{2} \big( 1 - \eta^m \big)\\
%    &= 1 - \eta^m.
%\end{split}
%\end{equation}
%\end{proof}

\begin{theorem}\label{the:finite-size-error}
The finite size error when $\eta = 1$ is given by
\begin{equation}
\begin{split}
    \epsilon_n &= \frac{1}{2} \norm{\Pr(k \mid m; 1, 0, n) - \Pr(k \mid m; 1, 0, \infty)}_1\\
    &= \frac{1}{n} \binom{m}{2} + \order{n^{-2}}.
\end{split}
\end{equation}
\end{theorem}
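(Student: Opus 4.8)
The plan is to exploit two simplifications that reduce this error to an essentially one-line computation. First I would specialize the closed form from Theorem \ref{the:photon-counting-distribution} to $\eta = 1$ and $p_d = 0$. There the factor $(1 - p_d)^{n-k+l}$ becomes $1$ and the bracket collapses to $[n - (n-k+l)] = (k-l)$, so
\begin{equation*}
    \Pr(k \mid m; 1, 0, n) = \frac{1}{n^m}\binom{n}{k}\sum_{l=0}^k (-1)^l \binom{k}{l}(k-l)^m = \frac{k!}{n^m}\binom{n}{k}\, S(m,k),
\end{equation*}
where I have recognized the alternating sum as $k!$ times the Stirling number of the second kind $S(m,k)$, using the very identity already invoked in Theorem \ref{the:binomial-convergence}. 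Since $k!\binom{n}{k} = n(n-1)\cdots(n-k+1)$, this is exactly the classical occupancy probability that $m$ balls thrown uniformly into $n$ boxes leave precisely $k$ boxes occupied, which is a useful sanity check.

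The second simplification concerns the $L_1$ distance itself. The ideal distribution is the point mass $\delta_{k,m}$ (consistent with setting $\eta = 1$ in Theorem \ref{the:binomial-convergence}, where the factor $(1-\eta)^{m-k}$ annihilates every term except $k=m$). For any probability distribution compared against a point mass at $m$, the half-$L_1$ distance collapses because both objects carry unit total mass:
\begin{equation*}
    \epsilon_n = \frac{1}{2}\sum_{k}\abs{\Pr(k\mid m;1,0,n) - \delta_{k,m}} = 1 - \Pr(m\mid m;1,0,n).
\end{equation*}
This is the step I would emphasize, since it sidesteps summing absolute differences term by term over all $k$ and reduces the whole problem to evaluating a single probability.

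It then remains to compute that probability and expand. Using $S(m,m)=1$ in the formula above gives
\begin{equation*}
    \Pr(m\mid m;1,0,n) = \frac{n(n-1)\cdots(n-m+1)}{n^m} = \prod_{j=0}^{m-1}\left(1 - \frac{j}{n}\right),
\end{equation*}
so that $\epsilon_n = 1 - \prod_{j=0}^{m-1}(1 - j/n)$. Expanding the finite product in powers of $1/n$, the leading correction is $-\frac{1}{n}\sum_{j=0}^{m-1} j = -\frac{1}{n}\binom{m}{2}$, with all remaining contributions of order $\order{n^{-2}}$. Hence $\epsilon_n = \frac{1}{n}\binom{m}{2} + \order{n^{-2}}$, as claimed.

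There is no serious analytic obstacle; the care needed is purely algebraic. The first delicate point is correctly performing the Stirling-number contraction at $\eta = 1$, and the second is justifying the point-mass collapse of the norm, which relies on noting that $\Pr(\cdot\mid m;1,0,n)$ is supported on $k \in \{1,\dots,\min(m,n)\}$, so the total mass away from $k=m$ equals exactly $1 - \Pr(m\mid m;1,0,n)$. In the relevant regime $n \geq m$ (as $n \to \infty$) this support lies in $\{1,\dots,m\}$ and the bookkeeping is clean.
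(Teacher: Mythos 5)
Your proof is correct, and it takes a genuinely different route from the paper's. The paper recycles the Stirling-number expansion of $\Pr(k \mid m; \eta, 0, n)$ in powers of $n$ (equation \eqref{eq:stirling-rewrite} from the proof of Theorem \ref{the:binomial-convergence}), specializes to $\eta = 1$ so that only the $b = j - m$ terms survive, and then evaluates the $L_1$ norm term by term, extracting the leading order from the two dominant contributions at $k = m-1$ and $k = m$ via $s(m-1,m-1)S(m,m-1)$ and $s(m,m-1)S(m,m)$, each equal to $\pm\binom{m}{2}$. You instead recognize that at $\eta = 1$, $p_d = 0$ the counting distribution is the classical occupancy distribution, use the standard collapse of the half-$L_1$ distance against a point mass, $\epsilon_n = 1 - \Pr(m \mid m; 1,0,n)$ (which needs only normalization of the distribution, not the support bookkeeping you mention), and then expand the birthday-problem product $\prod_{j=0}^{m-1}(1 - j/n)$. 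Your argument is more elementary — it avoids Stirling numbers of the first kind entirely and never manipulates the absolute values over all $k$ — and it buys something extra: an exact closed form $\epsilon_n = 1 - \prod_{j=0}^{m-1}(1 - j/n)$ valid for all $n \geq m$, from which the $\order{n^{-2}}$ remainder (and even explicit bounds on it) can be read off directly, rather than only the leading asymptotics. The paper's approach, by contrast, earns its keep by reusing machinery already built for the $\eta < 1$ convergence result, so the two theorems share one computation; your route is self-contained but specific to $\eta = 1$.
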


\begin{proof}
Using the expression derived in equation \eqref{eq:stirling-rewrite} we get
\begin{equation}
\begin{split}
    \epsilon_n &= \frac{1}{2} \bigg\lVert \sum_{j = 0}^{\min(k, m - 1)} \sum_{b = j - m}^{-1} s(k, j) S(j - b, k) \times\\
    &\times \binom{m}{j - b} \eta^{j - b} (1 - \eta)^{m + b - j} n^b \bigg\rVert_1.
\end{split}
\end{equation}
The expression can be simplified by using that $(1 - \eta)^{m + b - j} = 0$ unless $m + b - j = 0$. This implies that
\begin{equation}
\begin{split}
    \epsilon_n &= \frac{1}{2} \bigg\lVert \sum_{j = 0}^{\min(k, m - 1)} s(k, j) S(m, k) n^{j - m} \bigg\rVert_1\\
    &= \frac{1}{2} \sum_{k = 0}^m \abs{\sum_{l = \max(1, m - k)}^m \frac{1}{n^l} s(k, m - l) S(m, k)}.
\end{split}
\end{equation}
To leading order in $n$ we get that
\begin{equation}
\begin{split}
    \epsilon_n &= \frac{1}{2n} \sum_{k = m - 1}^m \abs{s(k, m - 1) S(m, k)} + \order{n^{-2}}\\
    &= \frac{1}{n} \binom{m}{2} + \order{n^{-2}}.
\end{split}
\end{equation}
\end{proof}

%\begin{theorem}\label{the:dark-count-error}
%Assume that $m = 0$ then it holds that
%\begin{equation}
%    \epsilon_d = 1 - (1 - p_d)^n.
%\end{equation}
%\end{theorem}
%
%\begin{proof}
%If the dark count probability is equal to zero, then it follows that %$\Pr(k \mid 0; \eta, 0, n) = \delta_{k, 0}$. Furthermore, if every %detector has a probability $p_d$ to click when no photon hits it, %then it follows that the probability for $k$ clicks is
%\begin{equation}
%    \Pr(k \mid m; \eta, p_d, n) = \binom{n}{k} p_d^k (1 - p_d)^{n - k}.
%\end{equation}
%Hence, the error is given by
%\begin{equation}
%\begin{split}
%    \epsilon_d &= \frac{1}{2} \sum_{k \in \mathbb{N}} \abs{\binom{n}{k} p_d^k (1 - p_d)^{n - k} - \delta_{k, 0}}\\
%    &= \frac{1}{2} \sum_{k > 0} \binom{n}{k} p_d^k (1 - p_d)^{n - k} + \frac{1}{2} \big(1 - (1 - p_d)^n \big)\\
%    &= 1 - (1 - p_d)^n.
%\end{split}
%\end{equation}
%\end{proof}
\end{document}